\numberwithin{equation}{section}
\newtheorem{Definition}{Definition}[section]
\newtheorem{Theorem}[Definition]{Theorem}
\theoremstyle{definition}
\newtheorem{Remark}[Definition]{Remark}
\renewcommand{\H}{\mathbb{H}}
\newcommand{\N}{\mathbb{N}}
\newcommand{\R}{\mathbb{R}}
\newcommand{\mm}{{\mbox{\boldmath$m$}}}
\newcommand{\sfL}{{\sf L}}
\newcommand{\sfP}{{\sf P}}
\newcommand{\Kliminf}{K\kern-3pt-\kern-2pt\mathop{\rm lim\,inf}\limits}  
\renewcommand{\d}{{\mathrm d}}
\newcommand{\dt}{{\d t}}
\newcommand{\ddt}{{\frac \d\dt}}
\newcommand{\restr}[1]{\lower3pt\hbox{$|_{#1}$}} 
\newcommand{\nchi}{{\raise.3ex\hbox{$\chi$}}}
\newcommand{\fr}{\penalty-20\null\hfill$\blacksquare$}                      
\newcommand{\prob}[1]{\mathscr P(#1)}                   
\newcommand{\probt}[1]{\mathscr P_2(#1)}                   
\renewcommand{\mm}{\mathfrak m}                                
\newcommand{\HS}{{\lower.3ex\hbox{\scriptsize{\sf HS}}}}
\renewcommand{\H}[1]{{\rm Hess}(#1)}
\newcommand{\Ric}{{\rm Ric}}
\newcommand{\cons}{\mathscr{E}}
\title{A generalization of Costa's Entropy Power Inequality}
\author{Luca Tamanini \thanks{CEREMADE (UMR CNRS 7534), Universit\'e Paris Dauphine PSL, Place du Mar\'echal de Lattre de Tassigny, 75775 Paris Cedex 16, France and INRIA-Paris, MOKAPLAN, 2 Rue Simone Iff, 75012, Paris, France. email: tamanini@ceremade.dauphine.fr}}
\begin{document}

\maketitle

\begin{abstract}
Aim of this short note is to study Shannon's entropy power along entropic interpolations, thus generalizing Costa's concavity theorem. We shall provide two proofs of independent interest: the former by $\Gamma$-calculus, hence applicable to more abstract frameworks; the latter with an explicit remainder term, reminiscent of \cite{Villani06}, allowing us to characterize the case of equality.
\end{abstract}


\tableofcontents

\section{Introduction and statement of the result}

Given a random variable $X$ with density $u$ in $\R^n$, Shannon's entropy and entropy power are respectively defined as
\begin{equation}\label{eq:two def}
\mathcal H(u) := -\int_{\R^n} u\log u\,\d\mathcal{L}^n, \qquad \mathcal N(u) := \exp\Big(\frac{2}{n}\mathcal H(u)\Big),
\end{equation}
where $\mathcal{L}^n$ denotes the $n$-dimensional Lebesgue measure. In \cite{Costa85} Costa proved that Shannon's entropy power is concave along the heat flow, namely if $u$ is a non-negative probability density on $\R^n$ and $\sfP_t$ is the semigroup associated to the heat equation, i.e.\ 
\[
\frac{\partial}{\partial t}\sfP_t u = \Delta \sfP_t u,
\]
then
\begin{equation}\label{eq:costa}
\frac{\d^2}{\dt^2} \mathcal N(\sfP_t u) \leq 0, \qquad \forall t > 0.
\end{equation}
In fact, inequality is strict for all $t>0$ unless $u$ is an isotropic Gaussian distribution (in which case, equality holds for all $t>0$). The proof, originally quite involved, was eventually simplified in \cite{Dembo89,DCT91,Villani06}. This result plays an important role in information theory, as it allows for instance to deduce the Entropy Power Inequality (see \cite{Rioul10} for an exhaustive list of references), and it is also useful in connection to some functional inequalities, e.g.\ the dimensional logarithmic Sobolev inequality, as pointed out in \cite[Chapter 10]{ABCFGMRS00}.

The proof of \eqref{eq:costa} relies on De Bruijn's identity $\frac{\d}{\d t}\mathcal{H}(\sfP_t u) = \mathcal{I}(\sfP_t u)$ relating Shannon's entropy and Fisher information, the latter being defined as
\[
\mathcal I(u) := \int_{\R^n} \frac{|\nabla u|^2}{u}\,\d\mathcal{L}^n = \int_{\R^n} |\nabla\log u|^2 u\,\d\mathcal{L}^n.
\]
An interpolation problem strictly related with the heat semigroup is the so-called Schr\"odinger system, which reads as follows: given two probabilty measures $\mu = u\mathcal{L}^n$, $\nu = v\mathcal{L}^n$ and a parameter $T>0$, find two non-negative Borel functions $f^T$, $g^T$ (also called ``decomposition'') such that
\begin{equation}\label{eq:schr-system}
u = f^T\, \sfP_T g^T, \qquad v = g^T\, \sfP_T f^T.
\end{equation}
If this system is solvable, then $\rho_t^T := \sfP_t f^T \sfP_{T-t}g^T$ is a probability density which interpolates between $u$ at time $t=0$ and $v$ at time $t=T$ and we will eventually refer either to it or to $\mu_t^T := \rho_t^T\mathcal{L}^n$ as ``$T$-entropic interpolation''. It is worth mentioning that the heat flow is a particular entropic interpolation: indeed, if $\mu = u\mathcal{L}^n$ and $\nu = \sfP_T u\mathcal{L}^n$ (in the sequel, with a slight abuse of notation we will write $\nu = \sfP_T\mu$ for sake of brevity), then \eqref{eq:schr-system} is trivially solved by $f^T = u$ and $g^T = 1$.

From a physical point of view (see \cite{Leonard12} for a detailed discussion and \cite{Conforti17} for a more recent insight), $\mu$ and $\nu$ can be thought of as probability distributions of a cloud of independent Brownian particles observed at two different times and the entropic interpolation is the most-likely evolution between them. Up to reparametrization, $T$ can be interpreted either as a diffusion parameter or (as in the present paper) as the time interval between the two observations. In the former case, the link between Schr\"odinger problem and optimal transport appears naturally by large deviations theory, whereas in the latter the convergence of the $T$-entropic interpolation towards the heat flow as $T \to \infty$ is rather easy to guess. With this physical interpretation in mind, we recognize the following quantity
\[
\cons_T(\mu,\nu) := \frac{1}{2}\int_{\R^n}|v_t^T|^2\,\d\mu_t^T - \frac{1}{2}\mathcal I(\mu_t^T)
\]
as the total energy of the system and by the conservation of energy principle it is not surprising that $\cons_T(\mu,\nu)$ does not depend on $t$, although the right-hand side might a priori do (cf.\ \cite[Lemma 3.2]{GLRT19} for a rigorous proof). In the definition above
\[
v_t^T := \nabla\log \sfP_{T-t}g^T - \nabla\log \sfP_t f^T
\]
is the velocity field driving the $T$-entropic interpolation, since $\rho_t^T$ and $v_t^T$ are linked together by the continuity equation, namely $\partial_t\rho_t^T + {\rm div}(v_t^T\rho_t^T) = 0$, as proved for instance in \cite[Proposition 4.3]{GigTam18} in a very general framework.

\bigskip

In this paper we show that if we look at the entropy power $\mathcal N$ defined in \eqref{eq:two def} along the entropic interpolation $(\rho_t^T)_{t \in [0,T]}$ rather than along the heat flow on $[0,T]$, then a generalization of Costa's EPI \eqref{eq:costa} involving $\mathcal N$, $\mathcal I$ and $\cons_T$ can be deduced, at least if $(\rho_t^T)_{t \in [0,T]}$ interpolates between two suitable measures. For a more precise statement, let us first define $\mathcal{S}'$ as the space of $L^\infty$ functions with bounded support and the Schwartz-like space
\[
\mathcal{S}'' := \bigg\{ f \in L^\infty \cap C^\infty(\R^n) \,:\, \parbox{20em}{$\|x^\alpha D^\beta f\|_\infty < \infty$, $\forall \alpha \in \N^n$, $\beta \in \N^n \setminus \{0_n\}$, $|\log f(x)| \leq C(1+|x|^2)$ for some $C>0$} \bigg\},
\]
where $0_n$ is the null $n$-tuple; relying on these two spaces, let us introduce the class $\Upsilon_T$ of \emph{regular} constraints for \eqref{eq:schr-system} as follows
\[
\Upsilon_T := \Big\{ (\mu,\nu) \,:\, \mu,\nu \in \prob{\R^n} \textrm{ and } \exists f^T,\,g^T \in \mathcal{S'} \cup \mathcal{S''} \textrm{ solving \eqref{eq:schr-system}} \Big\}.
\]
By \cite[Proposition 2.1]{GigTam18} we know that all couples $(\mu,\nu)$ of absolutely continuous measures with bounded densities and supports belong to $\Upsilon_T$. Furthermore, also $(u\mathcal{L}^n,\sfP_T u\mathcal{L}^n) \in \Upsilon_T$ for any $u \in \mathcal{S}' \cup \mathcal{S}''$, since in the associated decomposition $g^T \equiv 1$ and constant functions belong to $\mathcal{S}''$: this is the reason behind the choice of excluding $\beta = 0_n$ (but still asking for $f \in L^\infty$) in the definition of $\mathcal{S}''$. As a consequence of this simple fact, any statement valid for all $(\mu,\nu) \in \Upsilon_T$ implies as a byproduct a particular statement for the heat flow: this is the case of Theorem \ref{thm:main-eucl} below, where Costa's EPI \eqref{eq:costa} appears as a particular case of its ``entropic'' version \eqref{eq:ent-costa}.

After this preamble, we can finally state our main results.



\begin{Theorem}\label{thm:main-eucl}
Let $(\mu,\nu) \in \Upsilon_T$ for some $T>0$ and denote by $(\rho_t^T)_{t \in [0,T]}$ the $T$-entropic interpolation between $\mu$ and $\nu$. Then $t \mapsto \mathcal N(\rho_t^T)$ belongs to $C([0,T]) \cap C^2((0,T))$ and it holds
\begin{equation}\label{eq:ent-costa}
\frac{\d^2}{\dt^2} \mathcal N(\rho_t^T) \leq \frac{4}{n^2}\mathcal N(\rho_t^T)\mathcal I(\rho_t^T) \cons_T(\mu,\nu).
\end{equation}
If $\nu = \sfP_T\mu$, then \eqref{eq:ent-costa} reduces to Costa's inequality \eqref{eq:costa}.
\end{Theorem}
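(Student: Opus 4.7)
The plan is to pass through the Shannon entropy via $\mathcal N = \exp(\tfrac{2}{n}\mathcal H)$, so that
\[
\frac{\ddot{\mathcal N}(\rho_t^T)}{\mathcal N(\rho_t^T)} = \frac{2}{n}\ddot{\mathcal H}(\rho_t^T) + \frac{4}{n^2}\dot{\mathcal H}(\rho_t^T)^2,
\]
whence \eqref{eq:ent-costa} is equivalent to $\ddot{\mathcal H} + \tfrac{2}{n}\dot{\mathcal H}^2 \leq \tfrac{2}{n}\mathcal I\,\cons_T$. Setting $\varphi_t := \log\sfP_t f^T$ and $\psi_t := \log\sfP_{T-t}g^T$, I would introduce
\[
A_t := \int|\nabla\varphi_t|^2\rho_t^T\,\d\mathcal L^n, \quad B_t := \int|\nabla\psi_t|^2\rho_t^T\,\d\mathcal L^n, \quad C_t := \int\nabla\varphi_t\cdot\nabla\psi_t\,\rho_t^T\,\d\mathcal L^n.
\]
A direct computation from the continuity equation with $v_t^T = \nabla\psi_t - \nabla\varphi_t$ and one integration by parts yield $\dot{\mathcal H}(\rho_t^T) = A_t - B_t$ and $\mathcal I(\rho_t^T) = A_t+B_t+2C_t$; conservation of energy (since $\int|v_t^T|^2\rho_t^T = A_t+B_t-2C_t$) gives $\cons_T(\mu,\nu) = -2C_t$, in particular $C_t$ is independent of $t$.

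The heart of the argument is a Bochner-type identity for $\ddot{\mathcal H}$. The pointwise computation
\[
\partial_t\Bigl(\frac{|\nabla\sfP_t f^T|^2}{\sfP_t f^T}\Bigr) = \Delta\Bigl(\frac{|\nabla\sfP_t f^T|^2}{\sfP_t f^T}\Bigr) - 2\,\sfP_t f^T\,|{\rm Hess}\,\varphi_t|^2
\]
(and its analogue for $\sfP_{T-t}g^T$, with the sign of the time derivative reversed) follows from Bochner in $\R^n$. Multiplying by the complementary factor $\sfP_{T-t}g^T$ (respectively $\sfP_t f^T$), integrating and using $\partial_t\sfP_{T-t}g^T = -\Delta\sfP_{T-t}g^T$ (respectively $\partial_t\sfP_t f^T = \Delta\sfP_t f^T$), the Laplacian contributions cancel after one integration by parts, leaving
\[
\dot A_t = -2\int|{\rm Hess}\,\varphi_t|^2\rho_t^T\,\d\mathcal L^n, \qquad \dot B_t = +2\int|{\rm Hess}\,\psi_t|^2\rho_t^T\,\d\mathcal L^n,
\]
hence $\ddot{\mathcal H}(\rho_t^T) = -2\int\bigl(|{\rm Hess}\,\varphi_t|^2 + |{\rm Hess}\,\psi_t|^2\bigr)\rho_t^T\,\d\mathcal L^n$.

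To close the estimate I would apply two layers of Cauchy--Schwarz. First, $|{\rm Hess}\,u|^2 \geq (\Delta u)^2/n$ combined with Jensen's inequality against the probability measure $\rho_t^T\d\mathcal L^n$, together with the integration-by-parts identities $\int\Delta\varphi_t\,\rho_t^T = -(A_t+C_t)$ and $\int\Delta\psi_t\,\rho_t^T = -(B_t+C_t)$, yield
\[
-\ddot{\mathcal H}(\rho_t^T) \geq \frac{2}{n}\bigl[(A_t+C_t)^2 + (B_t+C_t)^2\bigr].
\]
Substituting $\dot{\mathcal H} = A_t-B_t$ and $\mathcal I\,\cons_T = -2C_t(A_t+B_t+2C_t)$, the sought inequality collapses to the algebraic statement
\[
(A_t-B_t)^2 + 2C_t(A_t+B_t) + 4C_t^2 \leq (A_t+C_t)^2 + (B_t+C_t)^2,
\]
which upon expansion is equivalent to $C_t^2 \leq A_t B_t$: this is precisely Cauchy--Schwarz for $\nabla\varphi_t,\nabla\psi_t$ in $L^2(\rho_t^T\d\mathcal L^n)$. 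When $\nu = \sfP_T\mu$ one has $g^T\equiv 1$, hence $\psi_t\equiv 0$, $C_t\equiv 0$ and $\cons_T = 0$, giving back \eqref{eq:costa}.

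The principal technical obstacle will be rigorously justifying the differentiation under the integral sign, the repeated integrations by parts without boundary contributions, and the claimed regularity $t \mapsto \mathcal N(\rho_t^T) \in C([0,T]) \cap C^2((0,T))$. This is exactly the purpose of the class $\Upsilon_T$: membership in $\mathcal S'\cup\mathcal S''$ is preserved (in a suitable sense) by the heat semigroup, so $\sfP_t f^T$ and $\sfP_{T-t}g^T$ inherit polynomial-times-Gaussian estimates on themselves and their derivatives, which make the Hessian integrals finite on $(0,T)$ and legitimize all of the above manipulations.
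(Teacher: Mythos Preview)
Your proof is correct and uses the same three key ingredients as the paper---the Bochner identity $\Gamma_2(\phi)=|\mathrm{Hess}\,\phi|^2\ge\frac{1}{n}(\Delta\phi)^2$, Jensen's inequality against $\rho_t^T\d\mathcal L^n$, and a final Cauchy--Schwarz in $L^2(\rho_t^T)$---but it organizes the computation through a different pair of variables. The paper works with $\vartheta_t^T=\psi_t-\varphi_t$ and $\log\rho_t^T=\varphi_t+\psi_t$, citing L\'eonard's formulas \eqref{eq:firstder}--\eqref{eq:secondder} for the first two derivatives of the entropy; you work directly with the forward/backward potentials $\varphi_t,\psi_t$ and their moment triple $(A_t,B_t,C_t)$. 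The two decompositions are related by the parallelogram identity $|\mathrm{Hess}\,\vartheta_t^T|^2+|\mathrm{Hess}\,\log\rho_t^T|^2=2(|\mathrm{Hess}\,\varphi_t|^2+|\mathrm{Hess}\,\psi_t|^2)$, so the content is equivalent. Your parametrization buys a clean self-contained derivation of $\ddot{\mathcal H}$ via the independent evolutions $\dot A_t=-2\int|\mathrm{Hess}\,\varphi_t|^2\rho_t^T$, $\dot B_t=+2\int|\mathrm{Hess}\,\psi_t|^2\rho_t^T$, and the pleasant collapse of the final estimate to the pure algebra $C_t^2\le A_tB_t$. The paper's choice of variables, on the other hand, keeps the velocity potential $\vartheta_t^T$ and the density in view (which is what generalizes verbatim to the Riemannian $\Gamma_2$ setting of Theorem~\ref{thm:main-mfd}), and in the second proof (Section~\ref{sec:proof2}) isolates the deficit as the two non-negative quadratics $A_1(\lambda^*),A_2(\eta^*)$, which is what drives the equality characterization of Theorem~\ref{thm:equality}. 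Your remark that equality in $C_t^2\le A_tB_t$ forces $\nabla\varphi_t\parallel\nabla\psi_t$ recovers the same rigidity by a slightly different route.
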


As a consequence of the proof provided in Section \ref{sec:proof2}, we can also characterize the equality case in \eqref{eq:ent-costa}. Roughly speaking, it always reduces to the equality case in \eqref{eq:costa}, so that in particular inequality in \eqref{eq:ent-costa} is always strict along non-trivial entropic interpolations.

\begin{Theorem}\label{thm:equality}
With the same assumptions and notations as in Theorem \ref{thm:main-eucl}, there exists $t \in (0,T)$ where \eqref{eq:ent-costa} holds with equality if and only if either $\mu = u\mathcal{L}^n$, $\nu = \sfP_T u \mathcal{L}^n$ or $\nu = u\mathcal{L}^n$, $\mu = \sfP_T u \mathcal{L}^n$ for some isotropic Gaussian distribution $u$.
\end{Theorem}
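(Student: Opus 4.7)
The plan is to revisit the proof of Theorem \ref{thm:main-eucl} carried out in Section \ref{sec:proof2}: as announced in the abstract, that argument provides an explicit remainder term in the spirit of \cite{Villani06}, i.e.\ it establishes not merely the inequality \eqref{eq:ent-costa} but an exact identity of the form
\[
\frac{\d^2}{\dt^2} \mathcal N(\rho_t^T) = \frac{4}{n^2}\mathcal N(\rho_t^T)\mathcal I(\rho_t^T) \cons_T(\mu,\nu) - R(t),
\]
with $R(t)\geq 0$ a non-negative functional of $\rho_t^T$, $f^T$ and $g^T$ written as a sum of squares. Equality in \eqref{eq:ent-costa} at some $t \in (0,T)$ is therefore equivalent to $R(t)=0$ for that particular $t$.

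The next step would be to single out, inside $R(t)$, two separately non-negative contributions $R(t)=R_1(t)+R_2(t)$. The first, $R_1(t)$, is the Costa--Villani piece, schematically of the form
\[
R_1(t) = c\int_{\R^n} \bigg| \H{\log \rho_t^T} - \frac{\Delta\log \rho_t^T}{n}\,\Id \bigg|^2 \rho_t^T\,\d\mathcal L^n,
\]
i.e.\ the squared norm of the trace-free part of the Hessian of $\log \rho_t^T$, which vanishes at some $t$ precisely when $\rho_t^T$ is an isotropic Gaussian. The second, $R_2(t)$, is a purely ``entropic'' correction involving the two potentials $\log \sfP_t f^T$ and $\log \sfP_{T-t} g^T$ driving the velocity field $v_t^T$; it is to be built so that its vanishing forces either $g^T$ or $f^T$ to be spatially constant, i.e.\ the interpolation to be a heat flow (forward in the first case, backward in the second).

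Given such a decomposition, the theorem follows easily. If equality holds at some $t \in (0,T)$, then $R_2(t)=0$ gives $\nu = \sfP_T\mu$ or $\mu = \sfP_T\nu$, so that $\rho_{\cdot}^T$ is a heat flow starting either from $\mu$ or from $\nu$; and $R_1(t)=0$ forces $\rho_t^T$ to be an isotropic Gaussian, which, since the heat semigroup preserves and acts injectively on the class of isotropic Gaussians, implies that the corresponding initial datum $u$ is itself an isotropic Gaussian. Conversely, if $(\mu,\nu)=(u\mathcal L^n,\sfP_T u\mathcal L^n)$ with $u$ isotropic Gaussian (or the symmetric version), then $g^T \equiv 1$, $\cons_T(\mu,\nu)=0$, $\rho_t^T = \sfP_t u$ is isotropic Gaussian for every $t$, and \eqref{eq:ent-costa} reduces to the classical equality case of Costa's inequality.

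The main obstacle is the rigidity of $R_2$: obtaining a sum-of-squares expression for the ``entropic'' part of the remainder whose vanishing \emph{pointwise in $t$} genuinely forces one of the Schr\"odinger potentials to be spatially constant, rather than only yielding some weaker algebraic relation between $\nabla\log \sfP_t f^T$ and $\nabla\log \sfP_{T-t}g^T$. I expect to leverage the smoothness and decay built into the class $\mathcal S' \cup \mathcal S''$ together with the analyticity of solutions to the heat equation (so that a gradient vanishing $\rho_t^T$-a.e.\ at a single time is in fact identically zero), and the structure of the system \eqref{eq:schr-system}; once this rigidity is achieved, the characterization of the equality case is immediate from the classical equality case of Costa's inequality.
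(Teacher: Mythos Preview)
Your overall plan---extracting an explicit non-negative remainder from the argument of Section~\ref{sec:proof2} and analysing its zero set---is exactly what the paper does. The structure of the remainder you anticipate, however, differs from the paper's in a way that makes the obstacle you flag essentially disappear.

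In the paper the remainder has \emph{three} non-negative pieces, not two. There are \emph{two} Costa--Villani-type squares, one built from $\log\rho_t^T$ and one from $\vartheta_t^T$:
\[
A_2(\eta^*) = \sum_{i,j}\int_{\R^n}\big(\partial_{ij}\log\rho_t^T + \eta^*\delta_{ij}\big)^2\rho_t^T\,\d\mathcal L^n,
\qquad
A_1(\lambda^*) = \sum_{i,j}\int_{\R^n}\big(\partial_{ij}\vartheta_t^T + \lambda^*\delta_{ij}\big)^2\rho_t^T\,\d\mathcal L^n,
\]
with $\eta^*,\lambda^*$ chosen optimally. Your $R_1$ corresponds only to $A_2$; you are missing the companion square for $\vartheta_t^T$. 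Once \emph{both} are subtracted, the leftover ``entropic'' piece is nothing more than the deficit in the Cauchy--Schwarz inequality \eqref{eq:cauchy-schwarz} already appearing in the $\Gamma$-calculus proof. So there is no elaborate sum-of-squares $R_2$ to build at all.

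The rigidity you worry about then comes essentially for free: equality in \eqref{eq:cauchy-schwarz} forces $\nabla\log\rho_t^T$ and $\nabla\vartheta_t^T$ to be parallel in $L^2(\rho_t^T)$, and the elementary identities
\[
\log\rho_t^T = \log\sfP_t f^T + \log\sfP_{T-t}g^T, \qquad \vartheta_t^T = \log\sfP_{T-t}g^T - \log\sfP_t f^T
\]
convert this parallelism into the vanishing of one of $\nabla\log\sfP_t f^T$, $\nabla\log\sfP_{T-t}g^T$. The interpolation is then a forward or backward heat flow, and the paper simply invokes the classical equality case of Costa's inequality. No separate analyticity argument or additional regularity from $\mathcal S'\cup\mathcal S''$ is used for this step.
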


As pointed out by Villani \cite{Villani06}, the concavity of Shannon's entropy power along the heat flow can also be deduced by relying on the so-called $\Gamma$-calculus, introduced by Bakry and \'Emery \cite{BakryEmery85} in the study of hypercontractive diffusions, although this approach does not allow to obtain a precise error term. This means that Costa's result holds not only in the Euclidean setting but also on Riemannian manifolds with non-negative Ricci curvature and even more generally (and with suitable modifications) on Riemannian manifolds with Ricci curvature bounded from below by some constant $K \in \R$. In this case \eqref{eq:costa} becomes
\begin{equation}\label{eq:costa-mfd}
\frac{\d^2}{\dt^2} \mathcal N(\sfP_t u) \leq -\frac{4K}{n}\mathcal N(\sfP_t u) \mathcal I(\sfP_t u), \qquad \forall t > 0,
\end{equation}
as recently proved in \cite{LiLi20}. In a completely analogous fashion, if we move from the Euclidean to the Riemannian framework, Theorem \ref{thm:main-eucl} reads as follows.

\begin{Theorem}\label{thm:main-mfd}
Let $(M,g)$ be an $m$-dimensional smooth, connected and complete Riemannian manifold without boundary, $V \in C^2(M)$ and $\mm = e^{-V}{\rm vol}$, where ${\rm vol}$ is the volume measure. Assume that for some $K \in \R$ and $n \geq m$ the Bakry-\'Emery Ricci tensor $\Ric_{V,n}$ satisfies the lower bound
\begin{equation}\label{eq:cd}
\Ric_{V,n} := \Ric_g + \H{V} - \frac{\nabla V \otimes \nabla V}{n-m} \geq Kg.
\end{equation}
Let $\mu,\nu \ll \mm$ be probability measures with bounded densities and supports and denote by $(\rho_t^T)_{t \in [0,T]}$ the $T$-entropic interpolation between them. Then $t \mapsto \mathcal N(\rho_t^T)$ belongs to $C([0,T]) \cap C^2((0,T))$ and it holds
\begin{equation}\label{eq:ent-costa-mfd}
\frac{\d^2}{\dt^2} \mathcal N(\rho_t^T) \leq \frac{4}{n^2}\mathcal N(\rho_t^T) \mathcal I(\rho_t^T) \cons_T(\mu,\nu) - \frac{2K}{n}\mathcal N(\rho_t^T) \Big(\int_M |v_t^T|^2\rho_t^T\,\d\mm + \mathcal I(\rho_t^T)\Big).
\end{equation}
If $M$ is compact and $\nu = \sfP_T\mu$, then \eqref{eq:ent-costa-mfd} reduces to \eqref{eq:costa-mfd}.
\end{Theorem}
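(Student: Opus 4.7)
The approach is $\Gamma$-calculus with the Bakry-\'Emery $\Gamma_2$ operator on $(M,\mm)$, mirroring the strategy that will be used for Theorem \ref{thm:main-eucl}. Writing $\varphi_t:=\sfP_t f^T$, $\psi_t:=\sfP_{T-t}g^T$, $a_t:=\log\varphi_t$, $b_t:=\log\psi_t$, so that $\rho_t^T=e^{a_t+b_t}$ and $v_t^T=\nabla b_t-\nabla a_t$, one integration by parts together with the dual heat equations $\partial_t\varphi_t = \Delta_V\varphi_t$ and $\partial_t\psi_t = -\Delta_V\psi_t$ yields
\[
\frac{\d}{\dt}\mathcal{H}(\rho_t^T) = \int_M(|\nabla a_t|^2 - |\nabla b_t|^2)\rho_t^T\,\d\mm.
\]
Differentiating once more and exploiting the Bochner-type identity $\Gamma(f,\Delta_V f) = \tfrac{1}{2}\Delta_V\Gamma(f) - \Gamma_2(f)$, a systematic bookkeeping should yield the clean formula
\[
\frac{\d^2}{\dt^2}\mathcal{H}(\rho_t^T) = -2\int_M\bigl[\Gamma_2(a_t) + \Gamma_2(b_t)\bigr]\rho_t^T\,\d\mm,
\]
with all cross contributions between $a_t$ and $b_t$ (terms involving $\rho\Delta_V a$, $\rho\Delta_V b$, $\Gamma(a,b)$) cancelling thanks to the forward/backward duality. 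This specializes, when $g^T\equiv 1$, to the classical Bakry-\'Emery identity $\frac{\d}{\dt}\mathcal I(\sfP_t u) = -2\int\Gamma_2(\log\sfP_t u)\sfP_t u\,\d\mm$.

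Next I would plug in the $\CD(K,n)$ bound $\Gamma_2(f)\geq K|\nabla f|^2 + \tfrac{1}{n}(\Delta_V f)^2$ furnished by \eqref{eq:cd}, and then use the polarization identities
\[
|\nabla a|^2+|\nabla b|^2 = \tfrac{1}{2}\bigl(|\nabla\log\rho|^2 + |v|^2\bigr), \qquad (\Delta_V a)^2 + (\Delta_V b)^2 = \tfrac{1}{2}\bigl((\Delta_V\log\rho)^2 + (\div_V v)^2\bigr),
\]
to rewrite the estimate in terms of the symmetric quantities $\mathcal I(\rho_t^T)$ and $\int|v_t^T|^2\rho_t^T\,\d\mm$. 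Two Cauchy-Schwarz inequalities with respect to the probability measure $\rho_t^T\,\d\mm$, combined with the integration-by-parts identities $\int\Delta_V\log\rho_t^T\cdot\rho_t^T\,\d\mm = -\mathcal I(\rho_t^T)$ and $\int\div_V v_t^T\cdot\rho_t^T\,\d\mm = \frac{\d}{\dt}\mathcal{H}(\rho_t^T)$, then give
$\int(\Delta_V\log\rho_t^T)^2\rho_t^T\,\d\mm \geq \mathcal I(\rho_t^T)^2$ and $\int(\div_V v_t^T)^2\rho_t^T\,\d\mm \geq (\tfrac{\d}{\dt}\mathcal{H}(\rho_t^T))^2$.

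To conclude I would transfer the bound to $\mathcal{N}$ through the chain rule $\frac{\d^2\mathcal N}{\dt^2} = \frac{2}{n}\mathcal N\frac{\d^2\mathcal H}{\dt^2} + \frac{4}{n^2}\mathcal N\bigl(\frac{\d\mathcal H}{\dt}\bigr)^2$, and absorb the positive quadratic term via the Cauchy-Schwarz inequality $(\frac{\d}{\dt}\mathcal H(\rho_t^T))^2 = \bigl(\int v_t^T\cdot\nabla\log\rho_t^T\cdot\rho_t^T\,\d\mm\bigr)^2 \leq \mathcal I(\rho_t^T)\int|v_t^T|^2\rho_t^T\,\d\mm$ together with the identity $\int|v_t^T|^2\rho_t^T\,\d\mm = 2\cons_T(\mu,\nu) + \mathcal I(\rho_t^T)$, which is precisely how the factor $\cons_T(\mu,\nu)$ enters the right-hand side of \eqref{eq:ent-costa-mfd}. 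The specialization to \eqref{eq:costa-mfd} when $\nu = \sfP_T\mu$ is then immediate, since $g^T\equiv 1$ forces $v_t^T = -\nabla\log\rho_t^T$, whence $\int|v_t^T|^2\rho_t^T\,\d\mm = \mathcal I(\rho_t^T)$ and $\cons_T(\mu,\nu) = 0$.

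The main obstacle is the second step: verifying rigorously that the numerous cross contributions to $\frac{\d^2}{\dt^2}\mathcal H(\rho_t^T)$ do collapse to the symmetric expression $-2\int[\Gamma_2(a_t)+\Gamma_2(b_t)]\rho_t^T\,\d\mm$. The computation is routine in principle but delicate in practice, and each integration by parts on the possibly non-compact manifold $M$ must be justified via the smoothness and integrability inherited by $\varphi_t,\psi_t$ from the boundedness of the densities and supports of $\mu,\nu$ through the regularizing effect of $\sfP_t$; this is also what restricts the $C^2$-regularity of $t\mapsto\mathcal N(\rho_t^T)$ to the open interval $(0,T)$.
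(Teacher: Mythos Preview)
Your proposal is correct and follows essentially the same $\Gamma$-calculus strategy as the paper: compute $\mathcal N''$ via the first and second derivative formulas for $\mathcal H(\rho_t^T)$, apply the Bochner inequality $\Gamma_2\geq K\Gamma+\tfrac1n(\sfL\cdot)^2$, then Jensen/Cauchy--Schwarz and the definition of $\cons_T$. The only cosmetic difference is the choice of variables: the paper works directly with $\vartheta_t^T=b_t-a_t$ and $\log\rho_t^T=a_t+b_t$, obtaining $\tfrac{\d^2}{\dt^2}\mathcal H(\rho_t^T)=-\int(\Gamma_2(\vartheta_t^T)+\Gamma_2(\log\rho_t^T))\rho_t^T\,\d\mm$, which is your formula $-2\int(\Gamma_2(a_t)+\Gamma_2(b_t))\rho_t^T\,\d\mm$ by the parallelogram identity for the quadratic form $\Gamma_2$; this spares the polarization step you insert afterwards. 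As for the ``main obstacle'' you flag---the rigorous derivation of the second-derivative formula and the integrations by parts on a non-compact $M$---the paper does not redo this computation but simply invokes \cite{Leonard13} and \cite[Proposition 4.8]{GigTam18}.
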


Of course, this change of framework needs some remarks. Both in \eqref{eq:costa-mfd} and \eqref{eq:ent-costa-mfd} it is understood that the reference measure in the definition of $\mathcal H$, $\mathcal N$, and $\mathcal I$ is no longer $\mathcal{L}^n$ but $\mm$. As concerns the semigroup $\sfP_t$, it denotes the diffusion semigroup associated with the Witten Laplacian $\sfL = \Delta_g - \nabla V \cdot \nabla$, where $\Delta_g$ is the Laplace-Beltrami operator: this means that $\sfP_t u$ solves $\partial_t\sfP_t u = \sfL\sfP_t u$. It is such a semigroup that has to be considered in \eqref{eq:schr-system}, in the definition of the entropic interpolation $(\rho_t^T)$ as well as in \eqref{eq:costa-mfd}.

As regards the link between \eqref{eq:costa-mfd} and \eqref{eq:ent-costa-mfd}, it is still formally true that the former is a particular case of the latter, as we shall discuss in Section \ref{sec:heuristics}, but a rigorous proof is technical without compactness assumption. Already \eqref{eq:costa-mfd} requires more effort than \eqref{eq:costa}. The reason preventing us from saying that, in full generality, \eqref{eq:ent-costa-mfd} reduces to \eqref{eq:costa-mfd} when $\nu = \sfP_T\mu$ is the fact that \eqref{eq:ent-costa-mfd} will be proven under a boundedness assumption on the supports of $\mu,\nu$, whereas the support of $\sfP_T\mu$ is the whole manifold: thus $\nu = \sfP_T\mu$ is never satisfied, unless $M$ is compact.




\medskip

In the rest of the paper we shall give two different proofs of Theorem \ref{thm:main-eucl}. In Section \ref{sec:proof1} we present a first abstract argument based on $\Gamma$-calculus, which proves Theorems \ref{thm:main-eucl} and \ref{thm:main-mfd} at the same time. In Section \ref{sec:proof2} we provide a second (algebraic) proof of Theorem \ref{thm:main-eucl}, whence Theorem \ref{thm:equality} immediately follows.

\bigskip

\noindent{\bf Acknowledgements.} The author would like to thank G.\ Conforti for valuable suggestions.

\section{Proof by \texorpdfstring{$\Gamma$}{G}-calculus}\label{sec:proof1}

Otto and $\Gamma$-calculus are powerful tools: the former allows to obtain in a rather easy way heuristic explanations for technical statements on the Wasserstein space; the latter is an abstract formalism based on semigroup theory, which fits well to diffusions in both the Euclidean and Riemannian setting. For this reason in Section \ref{sec:heuristics} we first provide a heuristics for Theorems \ref{thm:main-eucl} and \ref{thm:main-mfd} to hold, while Section \ref{sec:rigorous} is devoted to the real proof by $\Gamma$-calculus.

\subsection{Heuristics}\label{sec:heuristics}

After Otto's seminal work \cite{Otto01}, it is well established that a formal Riemannian structure is associated with the Wasserstein space $(\probt{M},W_2)$. This means that we can treat $\mathcal H$ as a smooth function defined on a manifold (or, in an even simpler way, on $\R^n$) and $(\mu_t^T)_{t \in [0,T]}$ as a smooth trajectory on it. Furthermore, it is also well known that several PDEs on $M$ can be lifted to gradient flow equations on $\probt{M}$ w.r.t.\ the Wasserstein metric of suitable functionals: this is the case of the heat flow, which reads as the gradient flow of $-\mathcal{H}$, namely $\dot{\mu}_t = \nabla\mathcal{H}(\mu_t)$ where $\mu_t := \sfP_t\mu$, $\mu \in \probt{M}$. Since in $(\probt{M},W_2)$ it is more common to work with measures rather than the corresponding densities w.r.t.\ $\mm$, with a slight abuse we keep the same notations introduced before, e.g.\ $\mathcal H(\mu)$ denotes $\mathcal H(u)$ provided $\mu = u\mm$; analogously for $\mathcal{N}$ and $\mathcal{I}$.

\bigskip

As \eqref{eq:ent-costa-mfd} is a statement on the second derivative of $t \mapsto \mathcal N(t) := \mathcal N(\mu_t^T)$, let us differentiate it twice. The first derivative reads as
\[
\mathcal N'(t) = \mathcal N(t)\cdot \frac{2}{n}\langle\nabla \mathcal H(\mu_t^T),\dot{\mu}_t^T \rangle,
\]
so that the second derivative is given by
\[
\mathcal N''(t) = \mathcal N(t)\Big(\frac{4}{n^2}\big(\langle\nabla \mathcal H(\mu_t^T),\dot{\mu}_t^T \rangle\big)^2 + \frac{2}{n}\H{\mathcal H}(\dot{\mu}_t^T,\dot{\mu}_t^T) + \frac{2}{n}\langle\nabla \mathcal H(\mu_t^T),\ddot{\mu}_t^T\rangle\Big).
\]
A look at the dynamical aspects of entropic interpolations is required in order to move forward; more precisely, the ``acceleration'' of $t \mapsto \mu_t^T$ (or, in more geometric terms, the covariant derivative of $t \mapsto \dot \mu_t^T$ along $t \mapsto \mu_t^T$) has to be determined. The desired information is provided by the following Newton's law (see \cite{Conforti17})
\begin{equation}\label{eq:newton}
\ddot{\mu}_t^T = \frac{1}{2}\nabla|\nabla \mathcal H(\mu_t^T)|^2,
\end{equation}
so that the previous identity becomes
\[
\mathcal N''(t) = \mathcal N(t)\Big(\frac{4}{n^2}\big(\langle\nabla \mathcal H(\mu_t^T),\dot{\mu}_t^T \rangle\big)^2 + \frac{2}{n}\H{\mathcal H}(\dot{\mu}_t^T,\dot{\mu}_t^T) + \frac{2}{n}\H{\mathcal H}(\nabla \mathcal H(\mu_t^T),\nabla \mathcal H(\mu_t^T))\Big).
\]
Now we rely on the geometric structure of $(\probt{M},W_2)$ and, more specifically, on the role played by the curvature-dimension condition \eqref{eq:cd} in connection with the Boltzmann entropy $-\mathcal H$. After the seminal works \cite{Sturm06I, Sturm06II, Erbar-Kuwada-Sturm13} it is well known that \eqref{eq:cd} is equivalent to the $(K,n)$-convexity of $-\mathcal H$. Let us recall that a functional $\mathcal F : \probt{M} \to \R$ is said to be $(K,n)$-convex provided
\begin{equation}\label{eq:Km convex}
\H{\mathcal F} \geq K{\rm Id} + \frac{1}{n}\nabla \mathcal F \otimes \nabla \mathcal F.
\end{equation}
This information turns into an upper bound on $\H{\mathcal H}$, whence
\[
\mathcal N''(t) \leq \mathcal N(t)\Big(\frac{2}{n^2}\big(\langle\nabla \mathcal H(\mu_t^T),\dot{\mu}_t^T \rangle\big)^2 - \frac{2}{n^2}|\nabla \mathcal H(\mu_t^T)|^4 - \frac{2K}{n}\big(|\dot{\mu}_t^T|^2 + |\nabla \mathcal H(\mu_t^T)|^2\big)\Big).
\]
It we further note that $|\langle\nabla \mathcal H(\mu_t^T),\dot{\mu}_t^T \rangle| \leq |\nabla \mathcal H(\mu_t^T)||\dot{\mu}_t^T|$ by Cauchy-Schwarz inequality and use De Bruijn's identity, which reads as $|\nabla \mathcal H|^2 = \mathcal I$ in Otto's formalism (namely the Fisher information is nothing but the squared norm of the gradient of Shannon's entropy), we obtain
\[
\mathcal N''(t) \leq \mathcal N(t)\Big(\frac{2}{n^2}\mathcal{I}(\mu_t^T)\big(|\dot{\mu}_t^T|^2 - \mathcal{I}(\mu_t^T)\big)^2 - \frac{2K}{n}\big(|\dot{\mu}_t^T|^2 + \mathcal I(\mu_t^T) \big)\Big).
\]
It only remains to remark that the speed of a curve $(\mu_t)$ solving the continuity equation with drift $v_t$ can be expressed as $|\dot{\mu}_t| = \|v_t\|_{L^2(\mu_t)}$, so that we recognize $\int_M |v_t^T|^2\,\d\mu_t^T$ in \eqref{eq:ent-costa-mfd} as $|\dot{\mu}_t^T|^2$ and, as a consequence, $|\dot{\mu}_t^T|^2 - \mathcal{I}(\mu_t^T)$ as $2\cons_T(\mu,\nu)$.

The fact that \eqref{eq:costa-mfd} is a particular case of \eqref{eq:ent-costa-mfd} is not surprising, since, as already mentioned, the heat flow is a particular case of entropic interpolation. If $\mu = u\mm$ and $\nu = \sfP_Tu \mm$, then the Schr\"odinger system \eqref{eq:schr-system} is solved by $f^T = u$ and $g^T = 1$, whence
\[
|\dot{\mu}_t^T|^2 = \mathcal I(\mu_t^T) = \mathcal I(\sfP_t u) \qquad \textrm{and} \qquad \cons_T(\mu,\nu) = 0.
\]
Plugging these identities into \eqref{eq:ent-costa-mfd} yields \eqref{eq:costa-mfd}.

\begin{Remark}
{\rm
The heuristics described in this section actually applies to a wider class of variational problems, known as \emph{generalized Schr\"odinger problems} and introduced in \cite{GLR18}. Given $\mathcal F : \probt{M} \to \R \cup \{+\infty\}$ and $\mu,\nu \in \probt{M}$, they read as the following action minimizing problem
\[
\inf_{(\nu_t)_{t \in [0,T]}}\int_0^T \Big(\frac{1}{2}|\dot{\nu}_t|^2 + \frac{1}{2}|\nabla\mathcal F|^2(\nu_t)\Big)\dt,
\]
where the infimum runs over all paths $(\nu_t)_{t \in [0,T]}$ joining $\mu$ to $\nu$, and they indeed generalize the dynamic formulation of the entropic cost \`a la Benamou-Brenier (see \cite{GigTam19}). However, at present a rigorous investigation is not possible for these problems, except for the Schr\"odinger problem.
}\fr
\end{Remark}

\subsection{Proof of the result}\label{sec:rigorous}


Let us first discuss Theorem \ref{thm:main-eucl}. In order to turn the heuristic approach presented above into a precise one, rigorous counterparts of \eqref{eq:newton} and \eqref{eq:Km convex} are required. As concerns the former, we shall rely on the following formulas for the first and second derivatives of the entropy along entropic interpolations, computed for the first time in \cite{Leonard13}: 
\begin{subequations}
\begin{align}
\label{eq:firstder}
\frac{\d}{\d t}\mathcal H(\rho_t^T) & = -\int_{\R^n} \langle\nabla \rho_t^T,\nabla\vartheta_t^T \rangle\,\d\mathcal{L}^n,\\
\label{eq:secondder}
\frac{\d^2}{\d t^2}\mathcal H(\rho_t^T) & = -\int_{\R^n} \Big(\Gamma_2(\vartheta_t^T) + \Gamma_2(\log\rho_t^T)\Big)\rho_t^T\,\d\mathcal{L}^n,
\end{align}
\end{subequations}
where $\vartheta_t^T := \log\sfP_{T-t}g^T - \log\sfP_t f^T$ and $\Gamma_2$ is the iterated carr\'e du champ operator defined as
\[
\Gamma_2(\phi) := \frac{1}{2}\Delta|\nabla\phi|^2 - \langle\nabla\phi,\nabla \Delta\phi\rangle, \qquad \forall \phi \in C_c^\infty(\R^n).
\]
As $\sfP_t \phi$ is the convolution of $\phi$ with the $n$-dimensional Gaussian density having $0_n$ mean and $2t\mathrm{Id}_n$ as covariance matrix, $\sfP_t$ maps $\mathcal{S}' \cup \mathcal{S}''$ into $\mathcal{S}''$ for all $t>0$; actually, a stronger statement holds: for any $\phi \in \mathcal{S}' \cup \mathcal{S}''$, $\sfP_t\phi$ is, locally in $t \in (0,\infty)$, uniformly bounded by an integrable function and the same is true for $|\partial_t\sfP_t\phi|$ and $|\partial_t^2\sfP_t\phi|$. Therefore the same arguments that justify the (twice) differentiability of the entropy along the heat flow in Costa's EPI, namely of $t \mapsto -\int_{\R^n}\Phi(\sfP_t \phi)\,\d\mathcal{L}^n$ for $\phi \in \mathcal{S}''$ with $\Phi(z) := z\log z$, allow to deduce that 
\[
\alpha(s,t) := -\int_{\R^n}\Phi(\sfP_s f^T)\sfP_{T-t}g^T\,\d\mathcal{L}^n \qquad \textrm{and} \qquad\beta(s,t) := - \int_{\R^n}\Phi(\sfP_{T-s}g^T)\sfP_t f^T\,\d\mathcal{L}^n
\]
are $C^2$ on $(0,T) \times (0,T)$ and continuous up to the boundary; since $\mathcal{H}(\rho_t^T) = \alpha(t,t) + \beta(t,t)$, as a byproduct $t \mapsto \mathcal{H}(\rho_t^T)$ belongs to $C([0,T]) \cap C^2((0,T))$. The validity of \eqref{eq:firstder}, \eqref{eq:secondder} for all $t \in (0,T)$ is then a matter of computations (see the already cited \cite{Leonard13}).

On the other hand, it is not difficult to verify that in $\R^n$ it holds
\begin{equation}\label{eq:bochner-simple}
\Gamma_2(\phi) = |\H{\phi}|^2_\HS \geq \frac{1}{n}(\Delta\phi)^2, \qquad \forall \phi \in C^\infty_c(\R^n)
\end{equation}
and this replaces \eqref{eq:Km convex} with $K=0$. With this premise, the heuristic argument of the previous section becomes fully rigorous in the following way.

\begin{proof}[Proof of Theorem \ref{thm:main-eucl}]
As $t \mapsto \mathcal H(\rho_t^T)$ is $C([0,T]) \cap C^2((0,T))$, so is $t \mapsto \mathcal N(\rho_t^T)$. For sake of brevity set $\mathcal N(t) := \mathcal N(\rho_t^T)$ and write its first derivative as
\[
\mathcal N'(t) = \mathcal N(t) \cdot \frac{2}{n}\ddt \mathcal H(\rho_t^T),
\]
so that by \eqref{eq:firstder}, \eqref{eq:secondder} and \eqref{eq:bochner-simple} $\mathcal N''$ can be estimated as follows
\[
\begin{split}
\mathcal N''(t) & = \mathcal N(t)\bigg(\frac{4}{n^2}\Big(\ddt \mathcal H(\rho_t^T)\Big)^2 + \frac{2}{n}\frac{\d^2}{\dt^2}\mathcal H(\rho_t^T)\bigg) \\
& = \mathcal N(t)\bigg(\frac{4}{n^2}\Big(\int_{\R^n}\langle\nabla\rho_t^T,\nabla\vartheta_t^T\rangle\,\d\mathcal{L}^n\Big)^2 - \frac{2}{n}\int_{\R^n}\Big(\Gamma_2(\vartheta_t^T) + \Gamma_2(\log\rho_t^T)\Big)\rho_t^T \d\mathcal{L}^n\bigg) \\
& \leq \mathcal N(t)\bigg(\frac{4}{n^2}\Big(\int_{\R^n}\langle\nabla\rho_t^T,\nabla\vartheta_t^T\rangle\,\d\mathcal{L}^n\Big)^2 - \frac{2}{n^2}\int_{\R^n} \Big( (\Delta\vartheta_t^T)^2 + (\Delta\log\rho_t^T)^2\Big)\rho_t^T\d\mathcal{L}^n \bigg).
\end{split}
\]
Then by Jensen's inequality and integration by parts
\[
\begin{split}
\int_{\R^n} \Big( (\Delta\vartheta_t^T)^2 + (\Delta\log\rho_t^T)^2\Big)&\rho_t^T\d\mathcal{L}^n \geq \Big(\int_{\R^n} \Delta\vartheta_t^T\,\rho_t^T \d\mathcal{L}^n\Big)^2 + \Big(\int_{\R^n} \Delta\log\rho_t^T\,\rho_t^T \d\mathcal{L}^n\Big)^2 \\
& = \Big(-\int_{\R^n} \langle\nabla\rho_t^T,\nabla\vartheta_t^T\rangle\,\d\mathcal{L}^n\Big)^2 + \Big(-\int_{\R^n} |\nabla\log\rho_t^T|^2\,\rho_t^T \d\mathcal{L}^n\Big)^2 \\
& = \Big(\int_{\R^n} \langle\nabla\log\rho_t^T,\nabla\vartheta_t^T\rangle\rho_t^T\,\d\mathcal{L}^n\Big)^2 + \mathcal I(\rho_t^T)^2
\end{split}
\]
and plugging this inequality into the previous one yields
\[
\mathcal N''(t) \leq \mathcal N(t)\bigg(\frac{2}{n^2}\Big(\int_{\R^n}\langle\nabla\log\rho_t^T,\nabla\vartheta_t^T\rangle\rho_t^T\,\d\mathcal{L}^n\Big)^2 - \frac{2}{n^2}\mathcal I(\rho_t^T)^2 \bigg).
\]
By Cauchy-Schwarz inequality the first summand on the right-hand side can be controlled as
\begin{equation}\label{eq:cauchy-schwarz}
\Big(\int_{\R^n}\langle\nabla\log\rho_t^T,\nabla\vartheta_t^T\rangle\,\rho_t^T\d\mathcal{L}^n\Big)^2 \leq \int_{\R^n} |\nabla\log\rho_t^T|^2\rho_t^T\d\mathcal{L}^n \int_{\R^n} |\nabla\vartheta_t^T|^2\rho_t^T\d\mathcal{L}^n
\end{equation}
and this implies
\[
\mathcal N''(t) \leq \mathcal N(t) \cdot \frac{2}{n^2}\mathcal I(\rho_t^T)\bigg(\underbrace{\int_{\R^n} |\nabla\vartheta_t^T|^2 \rho_t^T\d\mathcal{L}^n - \mathcal I(\rho_t^T)}_{= 2\cons_T(\mu,\nu)}\bigg),
\]
whence \eqref{eq:ent-costa}. Finally, if $\nu = \sfP_T\mu$ (and say $\mu = u\mathcal{L}^n$), then as already said the associated Schr\"odinger system \eqref{eq:schr-system} is solved by $f^T = u$ and $g^T=1$. By the very definition of the total energy $\cons_T$, this implies $\cons_T(\mu,\nu) = 0$ and plugging this information into \eqref{eq:ent-costa} yields \eqref{eq:costa}.
\end{proof}

Let us now discuss Theorem \ref{thm:main-mfd}. The fact that $t \mapsto \mathcal H(\rho_t^T)$ belongs to $C([0,T]) \cap C^2((0,T))$ and \eqref{eq:firstder}, \eqref{eq:secondder} hold true for all $t \in (0,1)$ (with $\R^n$, $\mathcal{L}^n$ replaced by $M$, $\mm$ respectively) is justified by \cite[Proposition 4.8]{GigTam18}. On the other hand, the curvature-dimension assumption \eqref{eq:cd} is equivalent to the generalized Bochner inequality
\begin{equation}\label{eq:bochner}
\Gamma_2(\phi) \geq K|\nabla\phi|^2 + \frac{1}{n}(\sfL\phi)^2, \qquad \forall \phi \in C_c^\infty(M),
\end{equation}
which replaces \eqref{eq:Km convex}; of course, in the definition of $\Gamma_2$ on $M$, $\sfL$ substitutes $\Delta$. After this digression, the proof of Theorem \ref{thm:main-mfd} is a minor modification of the previous one.

\begin{proof}[Proof of Theorem \ref{thm:main-mfd}] 
Computing $\mathcal N''$ as in the proof of Theorem \ref{thm:main-eucl} and using \eqref{eq:bochner} instead of \eqref{eq:bochner-simple} we get
\[
\begin{split}
\mathcal N''(t) & \leq \mathcal N(t)\bigg(\frac{4}{n^2}\Big(\int_M \langle\nabla\rho_t^T,\nabla\vartheta_t^T\rangle\,\d\mm\Big)^2 - \frac{2}{n^2}\int_M \Big( (\sfL\vartheta_t^T)^2 + (\sfL\log\rho_t^T)^2\Big)\rho_t^T\d\mm \\
& \qquad\qquad\qquad - \frac{2K}{n}\int_M \Big( |\nabla\vartheta_t^T|^2 + |\nabla\log\rho_t^T|^2\Big)\rho_t^T\d\mm \bigg).
\end{split}
\]
Now it suffices to follow the same argument as in the proof of Theorem \ref{thm:main-eucl} (replacing $\Delta$, $\mathcal{L}^n$ with $\sfL$, $\mm$ respectively, the same integration by parts formula is valid) and keep track of the additional term until the end to obtain \eqref{eq:ent-costa-mfd}.

If $M$ is compact and $\nu = \sfP_T\mu$ with $\mu = u\mm$, then $\cons_T(\mu,\nu) = 0$ as in Theorem \ref{thm:main-eucl} and moreover $\rho_t^T = \sfP_t u$, $|\nabla\vartheta_t^T| = |\nabla\log\sfP_t u|$.
\end{proof}

\begin{Remark}
{\rm
A further way to see that \eqref{eq:ent-costa-mfd} implies \eqref{eq:costa-mfd} relies on the long-time behavior of $T$-entropic interpolations and the energy $\cons_T(\mu,\nu)$, investigated in \cite{ConTam19}. From an intuitive point of view, the more the time parameter $T$ grows, the less the final condition $v = g^T\sfP_T f^T$ in \eqref{eq:schr-system} is influent, so that in the limit \eqref{eq:schr-system} is in fact a decoupled system and the $T$-entropic interpolation is nothing but the heat flow starting at $\mu$.

Under the further assumption that $K \geq 0$ and $\mm$ is a probability (for $K>0$ this is always true thanks to \cite[Theorem 4.26]{Sturm06I}), by \cite[Theorem 1.2]{ConTam19} we know that $\cons_T(\mu,\nu) \to 0$ as $T \to \infty$. Moreover, if $\mu = u\mm$ and $\nu = v\mm$, by \cite[Lemma 3.6]{ConTam19} we also know that
\[
\lim_{T \to \infty}f^T = u, \qquad \lim_{T \to \infty}g^T = v, \qquad \lim_{T \to \infty} \sfP_T f^T = \lim_{T \to \infty} \sfP_T g^T = 1,
\]
where all limits are in $L^p(\mm)$ for any $p \in [1,\infty)$, so that $\rho_t^T \to \sfP_t u$ in $L^p(\mm)$ as $T \to \infty$. Therefore it is intuitively clear that \eqref{eq:costa-mfd} can be recovered as the long-time limit of \eqref{eq:ent-costa-mfd}, since as $T \to \infty$ we expect the first term on the right-hand side of \eqref{eq:ent-costa-mfd} to vanish, whereas
\[
\lim_{T \to \infty}\int_M |v_t^T|^2\rho_t^T\,\d\mm + \mathcal{I}(\rho_t^T) = 2\mathcal{I}(\sfP_t u).
\]
To turn this sketch of proof into a rigorous demonstration, one should only pass through an integrated version of \eqref{eq:ent-costa-mfd} and argue by dominated convergence.
}\fr
\end{Remark}


\section{Proof of Theorem \ref{thm:main-eucl} with deficit}\label{sec:proof2}

In this section we shall give a direct proof of Theorem \ref{thm:main-eucl} with an (almost) exact error term, in the same spirit of \cite{Villani06}. This means that we shall put aside \eqref{eq:bochner-simple} and argue by explicit computations: the disadvantage is the validity of the approach only in the Euclidean setting, but as an advantage we are able to characterize the case of equality in \eqref{eq:ent-costa}, thus proving Theorem \ref{thm:equality}.

Inspired by \cite{Villani06}, let us first observe that, thanks to the computations carried out in the proof of Theorem \ref{thm:main-eucl} (and in particular thanks to the formula for $\mathcal N''$), \eqref{eq:ent-costa} is equivalent to
\begin{equation}\label{eq:alternative-ent-costa}
\int \Big(\Gamma_2(\vartheta_t^T) + \Gamma_2(\log\rho_t^T)\Big)\rho_t^T\,\d\mathcal{L}^n - \frac{2}{n}\Big(\int\langle\nabla\rho_t^T,\nabla\vartheta_t^T\rangle\,\d\mathcal{L}^n\Big)^2 + \frac{2}{n}\mathcal I(\rho_t^T)\cons_T(\mu,\nu) \geq 0
\end{equation}
and define
\[
A_1(\lambda) := \sum_{i,j=1}^n\int\big(\partial_{ij}\vartheta_t^T + \lambda\delta_{ij}\big)^2\rho_t^T\,\d\mathcal{L}^n, \qquad A_2(\eta) := \sum_{i,j=1}^n\int \big(\partial_{ij}\log\rho_t^T + \eta\delta_{ij}\big)^2\rho_t^T\,\d\mathcal{L}^n.
\]
By expanding $A_1$ as a binomial in $\lambda$, using integration by parts and the fact that $\rho_t^T$ is a probability density, we get
\[
\begin{split}
A_1(\lambda) & = \sum_{i,j=1}^n\int_{\R^n}\big(\partial_{ij}\vartheta_t^T\big)^2\rho_t^T\,\d\mathcal{L}^n - 2\sum_{i=1}^n\lambda\int_{\R^n} \partial_i\rho_t^T \partial_i \vartheta_t^T \,\d\mathcal{L}^n + \lambda^2 n \\
& = \int_{\R^n} \Gamma_2(\vartheta_t^T)\rho_t^T\,\d\mathcal{L}^n - 2\lambda\int_{\R^n}\langle\nabla\rho_t^T,\nabla\vartheta_t^T\rangle\,\d\mathcal{L}^n + \lambda^2 n,
\end{split}
\]
the second identity being motivated by Bochner's identity in $\R^n$, namely
\[
\sum_{i,j=1}^n(\partial_{ij}\phi)^2 = |\H{\phi}|^2_\HS = \frac{1}{2}\Delta|\nabla\phi|^2 - \langle\nabla\phi,\nabla\Delta\phi\rangle = \Gamma_2(\phi).
\]
In particular
\[
\lambda^* = \frac{1}{n}\int_{\R^n}\langle\nabla\rho_t^T,\nabla\vartheta_t^T\rangle\,\d\mathcal{L}^n \quad \Rightarrow \quad A_1(\lambda^*) = \int_{\R^n} \Gamma_2(\vartheta_t^T)\rho_t^T\,\d\mathcal{L}^n - \frac{1}{n}\Big(\int_{\R^n}\langle\nabla\rho_t^T,\nabla\vartheta_t^T\rangle\,\d\mathcal{L}^n\Big)^2.
\]
Arguing in the same way for $A_2$, we obtain
\[
A_2(\eta) = \int_{\R^n} \Gamma_2(\log\rho_t^T)\rho_t^T\,\d\mathcal{L}^n - 2\eta \mathcal I(\rho_t^T) + \eta^2 n
\]
with
\[
\eta^* = \frac{1}{n}\mathcal I(\rho_t^T) \quad \Rightarrow \quad A_2(\eta^*) = \int_{\R^n} \Gamma_2(\log\rho_t^T)\rho_t^T\,\d\mathcal{L}^n - \frac{1}{n} \mathcal I(\rho_t^T)^2.
\]
Hence
\[
\begin{split}
A_1(\lambda^*) + A_2(\eta^*) & = \int_{\R^n} \Gamma_2(\vartheta_t^T)\rho_t^T\,\d\mathcal{L}^n + \int_{\R^n} \Gamma_2(\log\rho_t^T)\rho_t^T\,\d\mathcal{L}^n - \frac{2}{n}\Big(\int_{\R^n}\langle\nabla\rho_t^T,\nabla\vartheta_t^T\rangle\,\d\mathcal{L}^n\Big)^2 \\
& \qquad + \frac{1}{n}\Big(\int_{\R^n}\langle\nabla\rho_t^T,\nabla\vartheta_t^T\rangle\,\d\mathcal{L}^n\Big)^2 - \frac{1}{n} \mathcal I(\rho_t^T)^2
\end{split}
\]
and by \eqref{eq:cauchy-schwarz} and the very definition of $\cons_T(\mu,\nu)$
\begin{equation}\label{eq:rancid}
\frac{1}{n}\Big(\int_{\R^n}\langle\nabla\rho_t^T,\nabla\vartheta_t^T\rangle\,\d\mathcal{L}^n\Big)^2 - \frac{1}{n} \mathcal I(\rho_t^T)^2 \leq \frac{2}{n}\mathcal I(\rho_t^T)\cons_T(\mu,\nu).
\end{equation}
Plugging this inequality into the previous identity exactly yields \eqref{eq:alternative-ent-costa}, since trivially $A_1(\lambda^*) + A_2(\eta^*) \geq 0$. Note that if equality occurs in \eqref{eq:alternative-ent-costa} for some $t$, then in particular equality must hold in \eqref{eq:rancid} and this is true if and only if \eqref{eq:cauchy-schwarz} is actually an identity. As Cauchy-Schwarz inequality is an equality only for parallel vectors, this means that
\[
\textrm{either} \quad \nabla\log\rho_t^T = \nabla\vartheta_t^T, \qquad \textrm{or} \quad \nabla\log\rho_t^T = -\nabla\vartheta_t^T.
\]
By definition of $\rho_t^T$ and $\vartheta_t^T$, this means that
\[
\textrm{either} \quad \nabla\log\sfP_t f^T = 0, \qquad \textrm{or} \quad \nabla\log\sfP_{T-t} g^T = 0,
\]
namely either $f^T$ or $g^T$ is constant and this implies that the entropic interpolation $(\mu_t^T)_{t \in [0,T]}$ is in fact a forward or backward heat flow. Therefore the case of equality in the entropic EPI \eqref{eq:ent-costa} reduces to equality in Costa's EPI \eqref{eq:costa} and the latter is already well understood (see e.g.\ \cite[Theorem 3]{Costa85}).

\bibliographystyle{siam}
{\small
\bibliography{biblio}}

\def\cprime{$'$} \def\cprime{$'$}
\begin{thebibliography}{10}

\bibitem{ABCFGMRS00}
{\sc C.~An{\'e}, S.~Blach{\`e}re, D.~Chafa{\"i}, P.~Foug{\`e}res, I.~Gentil,
  F.~Malrieu, C.~Roberto, and G.~Scheffer}, {\em Sur les in{\'e}galit{\'e}s de
  {S}obolev logarithmiques}, vol.~10, Soci{\'e}t{\'e} math{\'e}matique de
  France Paris, 2000.

\bibitem{BakryEmery85}
{\sc D.~Bakry and M.~{\'E}mery}, {\em Diffusions hypercontractives}, in
  S\'eminaire de probabilit\'es, {XIX}, 1983/84, vol.~1123 of Lecture Notes in
  Math., Springer, Berlin, 1985, pp.~177--206.

\bibitem{Conforti17}
{\sc G.~Conforti}, {\em A second order equation for {S}chr{\"o}dinger bridges
  with applications to the hot gas experiment and entropic transportation
  cost}, Probability Theory and Related Fields, 174 (2019).

\bibitem{ConTam19}
{\sc G.~Conforti and L.~Tamanini}, {\em A formula for the time derivative of
  the entropic cost and applications}, Preprint, arXiv:1912.10555,  (2019).

\bibitem{Costa85}
{\sc M.~Costa}, {\em A new entropy power inequality}, IEEE Transactions on
  Information Theory, 31 (1985), pp.~751--760.

\bibitem{Dembo89}
{\sc A.~Dembo}, {\em Simple proof of the concavity of the entropy power with
  respect to added {G}aussian noise}, IEEE Transactions on Information Theory,
  35 (1989), pp.~887--888.

\bibitem{DCT91}
{\sc A.~Dembo, T.~M. Cover, and J.~A. Thomas}, {\em Information theoretic
  inequalities}, IEEE Transactions on Information theory, 37 (1991),
  pp.~1501--1518.

\bibitem{Erbar-Kuwada-Sturm13}
{\sc M.~Erbar, K.~Kuwada, and K.-T. Sturm}, {\em On the equivalence of the
  entropic curvature-dimension condition and {B}ochner's inequality on metric
  measure spaces}, Inventiones mathematicae, 201 (2014).

\bibitem{GLR18}
{\sc I.~Gentil, C.~L{\'e}onard, and L.~Ripani}, {\em Dynamical aspects of
  generalized schr{\"o}dinger problem via {O}tto calculus - {A} heuristic point
  of view}, Preprint, arXiv:1806.01553,  (2018).

\bibitem{GLRT19}
{\sc I.~Gentil, C.~L{\'e}onard, L.~Ripani, and L.~Tamanini}, {\em An entropic
  interpolation proof of the {HWI} inequality}, Stochastic Processes and their
  Applications, 130 (2020), pp.~907--923.

\bibitem{GigTam18}
{\sc N.~{Gigli} and L.~{Tamanini}}, {\em Second order differentiation formula
  on ${RCD}^*({K},{N})$ spaces}, Accepted at JEMS,  (2018).

\bibitem{GigTam19}
{\sc N.~Gigli and L.~Tamanini}, {\em Benamou-{B}renier and duality formulas for
  the entropic cost on ${RCD}^*({K},{N})$ spaces}, Probability Theory and
  Related Fields, 176 (2020), pp.~1--34.

\bibitem{Leonard13}
{\sc C.~L{\'e}onard}, {\em On the convexity of the entropy along entropic
  interpolations}.
\newblock Preprint, arXiv:1310.1274v (2013).

\bibitem{Leonard12}
\leavevmode\vrule height 2pt depth -1.6pt width 23pt, {\em From the
  {S}chr\"odinger problem to the {M}onge-{K}antorovich problem}, J. Funct.
  Anal., 262 (2012).

\bibitem{LiLi20}
{\sc S.~Li and X.-D. Li}, {\em On the {S}hannon entropy power on {R}iemannian
  manifolds and {R}icci flow}, Preprint, arXiv:2001.00410,  (2020).

\bibitem{Otto01}
{\sc F.~Otto}, {\em The geometry of dissipative evolution equations: the porous
  medium equation}, Comm. Partial Differential Equations, 26 (2001),
  pp.~101--174.

\bibitem{Rioul10}
{\sc O.~Rioul}, {\em Information theoretic proofs of entropy power
  inequalities}, IEEE Transactions on Information Theory, 57 (2010),
  pp.~33--55.

\bibitem{Sturm06I}
{\sc K.-T. Sturm}, {\em On the geometry of metric measure spaces. {I}}, Acta
  Math., 196 (2006), pp.~65--131.

\bibitem{Sturm06II}
\leavevmode\vrule height 2pt depth -1.6pt width 23pt, {\em On the geometry of
  metric measure spaces. {II}}, Acta Math., 196 (2006), pp.~133--177.

\bibitem{Villani06}
{\sc C.~Villani}, {\em A short proof of the ``concavity of entropy power''},
  IEEE Transactions on Information Theory, 46 (2006), pp.~1695--1696.

\end{thebibliography}

\end{document}